\def\RR{\mathbb{R}}
\newcommand{\hide}[1]{}
\def\lambdaLame{\lambda_{\textrm{Lam\'e}}}
\def\muLame{\mu_{\textrm{Lam\'e}}}
\begin{document}
\title{Tuning Nonlinear Elastic Materials under Small and Large Deformations}


\author{Huanyu Chen}
\affiliation{%
  \institution{University of Southern California}
  \city{Los Angeles}
  \country{USA}}
\email{huanyuc@usc.edu}
\author{Jernej Barbi\v{c}}
\affiliation{%
  \institution{University of Southern California}
  \city{Los Angeles}
  \country{USA}}
\email{jnb@usc.edu}

\begin{abstract}
In computer graphics and engineering, nonlinear elastic
material properties of 3D volumetric solids are typically
adjusted by selecting a material family, such as St. Venant Kirchhoff,
Linear Corotational, (Stable) Neo-Hookean, Ogden, etc., and then
selecting the values of the specific parameters for that family,
such as the Lam\'{e} parameters, Ogden exponents, or whatever the parameterization
of a particular family may be.
However, the relationships between those parameter values,
and visually intuitive material properties such as object's ``stiffness'',
volume preservation, or the ``amount of nonlinearity'', are less clear
and can be tedious to tune.
For an arbitrary isotropic hyperelastic energy density function $\psi$
that is not parameterized in terms of the Lam\'{e} parameters,
it is not even clear what the Lam\'{e} parameters and Young's modulus
and Poisson's ratio are.
Starting from $\psi,$ we first give a concise definition of 
Lam\'{e} parameters, and therefore Young's modulus and Poisson's ratio.
Second, we give a method to adjust the object's three salient properties,
namely two small-deformation properties (overall ``stiffness'', and 
amount of volume preservation, 
prescribed by object's Young's modulus and Poisson's ratio), 
and one large-deformation property 
(material nonlinearity). We do this in a manner whereby each of
these three properties is decoupled from the other two properties,
and can therefore be set independently. This permits a new ability, namely
``normalization'' of materials: starting from two distinct materials,
we can ``normalize'' them so that they have the same small deformation properties,
or the same large-deformation nonlinearity behavior, or both.
Furthermore, our analysis produced a useful theoretical result,
namely it establishes that Linear Corotational materials
(arguably the most widely used materials in computer graphics)
are the simplest possible nonlinear materials.
\end{abstract}

%
%
\begin{CCSXML}
<ccs2012>
<concept>
<concept_id>10010147.10010371.10010352.10010379</concept_id>
<concept_desc>Computing methodologies~Physical simulation</concept_desc>
<concept_significance>500</concept_significance>
</concept>
</ccs2012>
\end{CCSXML}

\ccsdesc[500]{Computing methodologies~Physical simulation}

\keywords{nonlinear materials, FEM, first Piola-Kirchhoff stress, corotational material}

\begin{teaserfigure}
\centerline{\includegraphics[width=0.9\hsize]{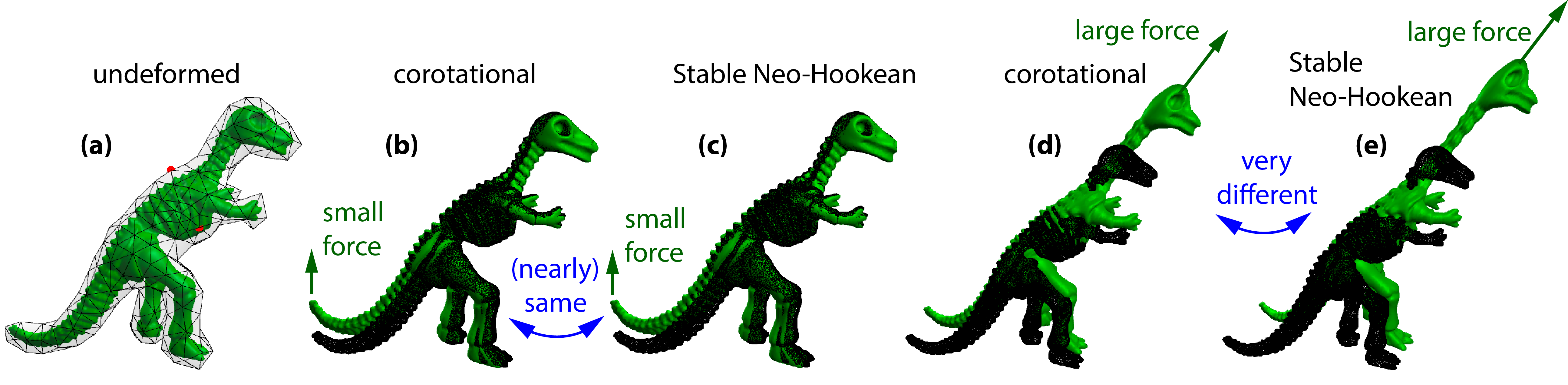}}
\vspace{-0.1cm}
\caption{\textbf{Volumetric material normalization.}
   We compare a corotational solid volumetric material~\cite{Mueller:2004:IVM} against the Stable Neo-Hookean (SNH) material~\cite{Smith:2018:SNF}
   (elastic energy functions are given in Table~\ref{tab:materials}). Red vertices are constrained (a).
   Observe that for SNH, $\lambdaLame = \lambda - \mu,\ \muLame=\mu,$ where $\lambda,\mu$ are the SNH material parameters
   (Table~\ref{tab:materials}), but for corotational material, $\lambdaLame = \lambda,\ \muLame=\mu.$
   If one naively uses SNH and misindentifies the SNH parameter $\lambda$ as $\lambdaLame,$
   the resulting stiffness matrix in the undeformed
   shape differs from that of the corotational material. The frequencies of the resulting linear vibrational modes
   also mismatch; at $\nu=0.2,$ the lowest frequency is off by 18\%. Instead, given any particular desired
   $E$ and $\nu,$ we determine $\lambdaLame$ and $\muLame$ 
   using Equation~\ref{eq:lambdaAndMuLameFormula}.
   We then solve for $\lambda$ and $\mu,$ separately for SNH and corotational, using Table~\ref{tab:materials}.
   The result is that now both materials have the same stiffness matrix (and mass matrix) in the rest shape; this is already predicted by our theory,
   but we also checked it experimentally.
   Effectively, this means that we ``normalized'' SNH to match the corotational material under small deformations.
   Consequently, for small loads, the two materials produce almost the same output (b,c).
   We can then observe the difference between these two materials due to the nonlinearities;
   they are revealed under large forces (d,e).
}
  \label{fig:volumetricMaterialNormalization}
\end{teaserfigure}

\maketitle

\section{Introduction}

Adjusting deformable object material properties to meet some artistic goal
is a commonly encountered operation in computer graphics practice.
There is a plethora of available material model families,
each of which incorporates a small number of material parameters to tune.
When investigating all these materials, it is useful to be able
to ensure that the materials all share the same small deformation properties,
so that the differences in their large-deformation nonlinear properties can be 
investigated systematically against a common small deformation baseline.
By small deformation properties, we mean the material's two Lam\'{e} parameters,
or equivalently (via bijective formulas), Young's modulus and Poisson's ratio.
While many material families do already parameterize the materials
via Lam\'{e} parameters, this is not universally true.
And so the first question that we address is, ``Is it possible
to meaningfully \emph{define} Lam\'{e} parameters (and consequently
Young's modulus and Poisson's ratio) for an arbitrary isotropic
hyperelastic nonlinear material? We provide such a definition
in this paper, and it matches the Lam\'{e} parameters for the common
materials. The key idea is to perform a second-order Taylor
expansion of $\psi$ in terms of $\lambda_i$ 
around the rest shape ($\lambda_i=1,\,i=1,2,3$).
This process is identical to approximating the first Piola-Kirchhoff (PK1)
principal stresses 
(singular values of the PK1 tensor $P=d\psi/dF$; $F$ is the $3\times 3$ deformation gradient)
at the rest shape as a linear function of $\lambda_i,$
a process that we call PK1-linearization. PK1-linearization correctly incorporates
rotations, and as such does not suffer from artifacts under large deformations.
Note that in deformable object modeling, a more common linearization
is the one of internal forces around the rest shape using the stiffness matrix;
however, this linearization is not rotation-invariant and quickly produces
visible artifacts. We prove that for any $\psi,$ its PK1-linearization is
a Linear Corotational material. Obviously, any Linear Corotational material
is a PK1-linearization of at least one material, namely itself. This means
that Linear Corotational materials are ``special'', namely they are
precisely the materials that do not have any third-order or higher terms
in their Taylor expansion in terms of $\lambda_i$ around the rest shape.
This sheds a new light on this widely used material: 
in light of our result, the popularity of Linear Corotational materials is not surprising; 
they are arguably the simplest possible nonlinear isotropic materials.
These observations also lead to a natural definition
of Lam\'{e} parameters for an arbitrary $\psi.$ Namely, compute the Linear
Corotational material that is its PK1-linearization, and read the
Lam\'{e} parameters of that material.
These observations allow us to find a material in any material family
(with at least two parameters) that matches a given Young's modulus and Poisson's ratio.
They also allow us to ``normalize'' two nonlinear materials
so that they have the same quadratic Taylor expansion (or equivalently, PK1-linearization),
i.e., they share the same Lam\'{e} parameters, or equivalently Young's modulus
or Poisson's ratio.

With the small-deformation regime covered, we then propose a method
to modify any material $\psi$ so that (1) its PK1-linearization remains the same,
and (2) the material progressively stiffens or softens under large deformations in a prescribed
manner, using a single scalar parameter that we call ``nonlinearity''.
This makes it easily possible to make a material more or less nonlinear.
An existing alternative is to change the material family for one that is inherently more
(or less) nonlinear (e.g., StVK is known to be more nonlinear than the Stable Neo-Hookean material).
However, this still limits the choice to a few discrete material families; whereas with
our method, we obtain a continuously-varying nonlinearity adjustment.


\section{Related Work}
\label{sec:relatedWork}

Our work is applicable generally to 3D solid deformable object simulation;
it acts as a mechanism to adjust the material properties to meet specific
user requirements, and is agnostic of the specific underlying simulator.
Nonlinear deformable object simulation has a long history in computer graphics;
for good surveys, please see~\cite{Sifakis:2012:FSO,Kim:2020:DD}.
We use ``stretch-based'' materials~\cite{Xu:2015:NMD,Chen:2023:CAR} to
express the elastic energy density function. These are materials where
the elastic energy density function is an isotropic function of the three principal
stretches; we use them because they are general, and recent work demonstrated how
to robustly remove singularities~\cite{Chen:2023:CAR}.
Designing 3D solid material properties to meet specific artistic goals
has been investigated in~\cite{Wang:2020:ACS}; however, the paper
only discussed adjusting ``stiffness'' (via Young's modulus);
and did not discuss Poisson's ratio, Lam\'{e} parameters or nonlinearity
adjustments. For plants~\cite{Wang:2017:BMB}, it has been shown
how to adjust model-reduced models to match desired user frequencies.
There has not been a prior method that has demonstrated how to 
adjust nonlinearity and small-deformation behavior with only three
compact continuous parameters. The definitions of ``PK1-linearity''
and the observations and proofs on the special role of the Linear Corotational
material are also novel. Linear Corotational material has been introduced
to computer graphics in~\cite{Mueller:2004:IVM}, and is widely used,
e.g.,~\cite{Chao:2010:ASG,McAdams:2011:EEF,Hecht:2012:USC,Verschoor:2018:SHS,Arriola:2020:MOD}.

There are many nonlinear material models used in computer graphics
and engineering. Our method is generic and works with an arbitrary
isotropic material. We explicitly provide our defined and computed
Lam\'{e} parameters (Table~\ref{tab:materials}) for many commonly used materials.
These include materials from the Seth-Hill family~\cite{Seth:1964:GSM}
(Linear Corotational, StVK, Hencky, Symmetric Seth-Hill~\cite{Bazant:1998:ETC},
Hill family~\cite{Hill:1968:OCI}); Neo-Hookean materials
(standard Neo-Hookean, Stable Neo-Hookean~\cite{Smith:2018:SNF}, 
STS~\cite{Pai:2018:THT}); Valanis-Landel materials~\cite{Valanis:1967:TSE,Valanis:2022:TVL,Xu:2015:NMD,Peng:1972:SEF}; materials used in geometric modeling 
(ARAP~\cite{Sorkine:2007:ARA}, Symmetric ARAP~\cite{Shtengel:2017:GOV},
Symmetric Dirichlet~\cite{Smith:2015:BPW}); and Ogden~\cite{Ogden:1972:LDI}
and Mooney-Rivlin materials. For a comprehensive review of isotropic hyperelastic material models, please refer to~\cite{Beda:2014:AAF,Melly:2021:ARO,He:2022:ACS}. 


\section{PK1-Linearization of Isotropic Materials}
\label{app:volElastic}

What happens if we approximate a 3D isotropic 
hyperelastic material elastic energy density function $\psi(\lambda_1,\lambda_2,\lambda_3)$
up to the second order around the rest shape in terms of $\lambda_1,\lambda_2,\lambda_3$?
This question is important because such an approximation controls 
the small-deformation behavior, while still resulting in a nonlinear simulation
that does not exhibit artifacts under large deformations, as explained below.
Equivalently, this can be seen as a linearization of the first Piola-Kirchhoff (PK1) principal stresses
$p_i = \partial \psi / \partial\lambda_i\in\RR$ around the rest shape in terms $\lambda_i.$
Recall that the PK1 stress tensor $P\in\RR^{3\times 3}$ is computed as~\cite{Irving:2004:IFE,Teran:2005:RQF}
\begin{gather}
P = U 
\textrm{diag}\Bigl(
  p_1(\lambda_1,\lambda_2,\lambda_3),
  p_2(\lambda_1,\lambda_2,\lambda_3),
  p_3(\lambda_1,\lambda_2,\lambda_3)
\Bigr)
V^T,\\ 
\textrm{for\ }\quad
F = U
\textrm{diag}\Bigl(\lambda_1, \lambda_2,\lambda_3\Bigr)
V^T,
\end{gather}
where $F\in\RR^{3\times 3}$ is the deformation gradient, $U,V\in\RR^{3\times 3}$ are orthogonal matrices
from the SVD of $F,$ $\lambda_i\in\RR$ are the singular values of $F$ (the ``principal stretches''),
and $p_i = \partial \psi / \partial\lambda_i\in\RR$ are the \emph{principal} PK1 stresses, i.e.,
singular values of $P.$ Because of the presence of $U$ and $V,$ if we linearize the PK1
principal stresses in terms of $\lambda_i$ (or equivalently, approximate $\psi$ up to second
order in terms of $\lambda_i$), the simulation remains rotation-aware
and will not exhibit geometric linearity artefacts.
The PK1-linearization is governed by two 
linear parameters (Young's modulus $E$, Poisson's ratio $\nu$) 
or equivalently, $(\lambdaLame,\muLame),$ as established by the following theorem.
\begin{theorem} 
\label{the:volumetric2nd}
   The 2\textsuperscript{nd}-order Taylor expansion of any 3D isotropic hyperelastic volumetric material $\psi$ as a function of principal stretches $\lambda_1, \lambda_2, \lambda_3$ around the rest shape $\lambda_1 = \lambda_2 = \lambda_3 = 1$ is a Linear Corotational material. 
The Lam\'{e} parameters of this Linear Corotational material are a unique function of $\psi,$ as follows:
\begin{gather}
\label{eq:defLambda}
      \lambdaLame = \partial_{12} \psi(1, 1, 1),\\
\label{eq:defMu}
      \muLame = \frac{1}{2} \bigl(\partial_{11} \psi(1, 1, 1) - \partial_{12} \psi(1, 1, 1)\bigr).
\end{gather}
\end{theorem}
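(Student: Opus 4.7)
The plan is to compute the second-order Taylor expansion of $\psi$ around the rest configuration $\lambda_1=\lambda_2=\lambda_3=1$ directly, use the symmetry forced by isotropy to reduce the number of independent quadratic coefficients to two, and algebraically recognize the resulting polynomial as the Linear Corotational energy density in principal-stretch form with specific Lam\'{e} parameters.

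First, I would write the generic quadratic expansion
\begin{equation*}
\psi(\lambda_1,\lambda_2,\lambda_3) \approx \psi(1,1,1) + \sum_{i=1}^{3} \partial_i\psi(1,1,1)(\lambda_i-1) + \frac{1}{2}\sum_{i,j=1}^{3} \partial_{ij}\psi(1,1,1)(\lambda_i-1)(\lambda_j-1).
\end{equation*}
Next, I would invoke isotropy: $\psi$ is symmetric under every permutation of its three arguments, so differentiating this invariance and evaluating at $(1,1,1)$ forces (i) all three first partials $\partial_i\psi(1,1,1)$ to coincide; (ii) the three diagonal second partials to share a common value $a := \partial_{11}\psi(1,1,1);$ and (iii) the six off-diagonal second partials to share a common value $b := \partial_{12}\psi(1,1,1).$ Because $F=I$ is the stress-free rest configuration, the principal PK1 stresses $p_i = \partial_i\psi$ vanish at rest, so the common first-partial value in (i) is zero and the linear term drops out. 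The additive constant $\psi(1,1,1)$ has no physical effect and can be absorbed.

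The expansion then collapses to
\begin{equation*}
\psi \approx \frac{a}{2}\sum_{i=1}^{3}(\lambda_i-1)^2 + b\sum_{i<j}(\lambda_i-1)(\lambda_j-1).
\end{equation*}
Using the identity $\bigl(\sum_i(\lambda_i-1)\bigr)^2 = \sum_i(\lambda_i-1)^2 + 2\sum_{i<j}(\lambda_i-1)(\lambda_j-1),$ I would rewrite this as
\begin{equation*}
\psi \approx \frac{a-b}{2}\sum_{i=1}^{3}(\lambda_i-1)^2 + \frac{b}{2}\Bigl(\sum_{i=1}^{3}(\lambda_i-1)\Bigr)^2,
\end{equation*}
which is exactly the Linear Corotational energy density expressed in principal stretches, with $\muLame = (a-b)/2$ and $\lambdaLame = b.$ Substituting the definitions of $a$ and $b$ recovers Equations~\ref{eq:defLambda} and~\ref{eq:defMu}. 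Uniqueness is automatic: a Linear Corotational material is determined by its two Lam\'{e} parameters, which are in turn read off from $a$ and $b,$ quantities fixed by $\psi.$

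The argument is essentially routine once the isotropy-induced symmetries are in hand; the only genuinely substantive steps are the reduction from nine second partials to the two independent values $a,b$ via permutation invariance, and the clean algebraic identification of the resulting quadratic with the Linear Corotational form. The main thing I would double-check is the convention for the Linear Corotational energy density (the placement of the factors of $1/2$ and $\muLame$ vs.\ $2\muLame$) so that the identification of $\muLame$ and $\lambdaLame$ matches the convention used elsewhere in the paper.
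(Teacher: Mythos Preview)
Your proposal is correct and follows essentially the same route as the paper's proof: write the second-order Taylor expansion, kill the constant and linear terms using the rest-shape conditions, collapse the Hessian to two independent entries via permutation symmetry, and regroup the quadratic form into the Linear Corotational energy. The only cosmetic difference is that you introduce the names $a,b$ for $\partial_{11}\psi(1,1,1)$ and $\partial_{12}\psi(1,1,1)$ and note that the constant can be absorbed rather than asserting $\psi(1,1,1)=0$; your caution about the $\muLame$ convention is well placed, and indeed your identification $\muLame=(a-b)/2,\ \lambdaLame=b$ matches the paper's Linear Corotational form in Table~\ref{tab:materials}.
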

\begin{proof}
   The $2^{\textrm{nd}}$-order Taylor expansion of $\psi(\lambda_1, \lambda_2, \lambda_3)$ around $(1, 1, 1)$ is
   \begin{equation}
      \psi(1, 1, 1) + \sum_{i=1}^3 \partial_i \psi(1, 1, 1) (\lambda_i - 1)
      + \frac{1}{2} \sum_{i=1}^3 \partial_{ij}\psi(1, 1, 1) (\lambda_i - 1)(\lambda_j - 1),
   \end{equation}
   where
   \begin{equation}
      \partial_i \psi = \frac{\partial \psi}{\partial \lambda_i}, \qquad
      \partial_{ij} \psi = \frac{\partial^2 \psi}{\partial \lambda_i \partial \lambda_j}.
   \end{equation}
   Because $\psi(1, 1, 1) = 0$ and $\partial_i \psi(1, 1, 1) = 0,$ for $i = 1, 2, 3$, the 0\textsuperscript{th}-order and the 1\textsuperscript{st}-order terms disappear. The 2\textsuperscript{nd}-order term can be further simplified since $\psi$ is a symmetric function of $\lambda_1, \lambda_2, \lambda_3$,
   \begin{gather}
      \partial_{11} \psi(1, 1, 1) = \partial_{22} \psi(1, 1, 1) 
      = \partial_{33} \psi(1, 1, 1), \\
      \partial_{12} \psi(1, 1, 1) = \partial_{13} \psi(1, 1, 1) 
      = \partial_{23} \psi(1, 1, 1),
   \end{gather}
It follows that
   \begin{gather}
      \frac{1}{2} \sum_{i=1}^3 \partial_{ij}\psi(1, 1, 1) (\lambda_i - 1)(\lambda_j - 1)=
      \\\nonumber
      = \frac{1}{2} \partial_{11} \psi(1, 1, 1) 
      \bigl((\lambda_1 - 1)^2 + (\lambda_2 - 1)^2 + (\lambda_3 - 1)^2\bigr)+
      \\\nonumber
      + \partial_{12} \psi(1, 1, 1) 
      \bigl((\lambda_1 - 1)(\lambda_2 - 1) + (\lambda_1 - 1)(\lambda_3 - 1)+\\
      + (\lambda_2 - 1)(\lambda_3 - 1)\bigr)=
      \\\nonumber
      = \frac{1}{2} \bigl(\partial_{11} \psi(1, 1, 1) - \partial_{12} \psi(1, 1, 1)\bigr)
      \bigl((\lambda_1 - 1)^2 + (\lambda_2 - 1)^2 +\bigr.\\
                \bigl. + (\lambda_3 - 1)^2\bigr)
      + \frac{1}{2} \partial_{12} \psi(1, 1, 1) 
      (\lambda_1 + \lambda_2 + \lambda_3 - 3)^2.
   \end{gather}
   But, this is exactly the Linear Corotational material with
   \begin{gather}
      \lambdaLame = \partial_{12} \psi(1, 1, 1),\\
      \muLame = \frac{1}{2} (\partial_{11} \psi(1, 1, 1) 
      - \partial_{12} \psi(1, 1, 1)).
   \end{gather}
\end{proof}
The theorem implies that the behavior of any volumetric solid isotropic material around 
the rest shape is determined by 
two quantities $\partial_{11} \psi (1, 1, 1)$ and $\partial_{12} \psi (1, 1, 1)$, 
and they provide the \emph{quadratic expansion} of the material. 
Therefore, this enables us to \emph{define} the Lam\'e parameters of an 
arbitrary volumetric isotropic material $\psi(\lambda_1, \lambda_2, \lambda_3)$ 
using the formulas in Equation~\ref{eq:defLambda} and~\ref{eq:defMu}!
Note that several (but not all) popular material families are already defined in terms 
of Lam\'e parameters; for those families our definition above matches those Lam\'e parameters.
In Section~\ref{sec:normalization}, we give a comprehensive table listing the two Lam\'e parameters 
for many commonly used materials.
Even though two volumetric materials may be distinct and even
exhibit very different behavior under large deformations, their PK1-linearization 
behavior is characterized by the above-defined $\muLame$ and $\lambdaLame.$
This observation enables us to ``normalize'' nonlinear volumetric elastic materials so that
their behavior around the rest shape is the same.
Therefore, their differences under large deformations can be more
meaningfully compared. We show an example of such a process in Figure~\ref{fig:volumetricMaterialNormalization}.

\begin{figure}[!t]
\centering
\includegraphics[width=1.0\hsize]{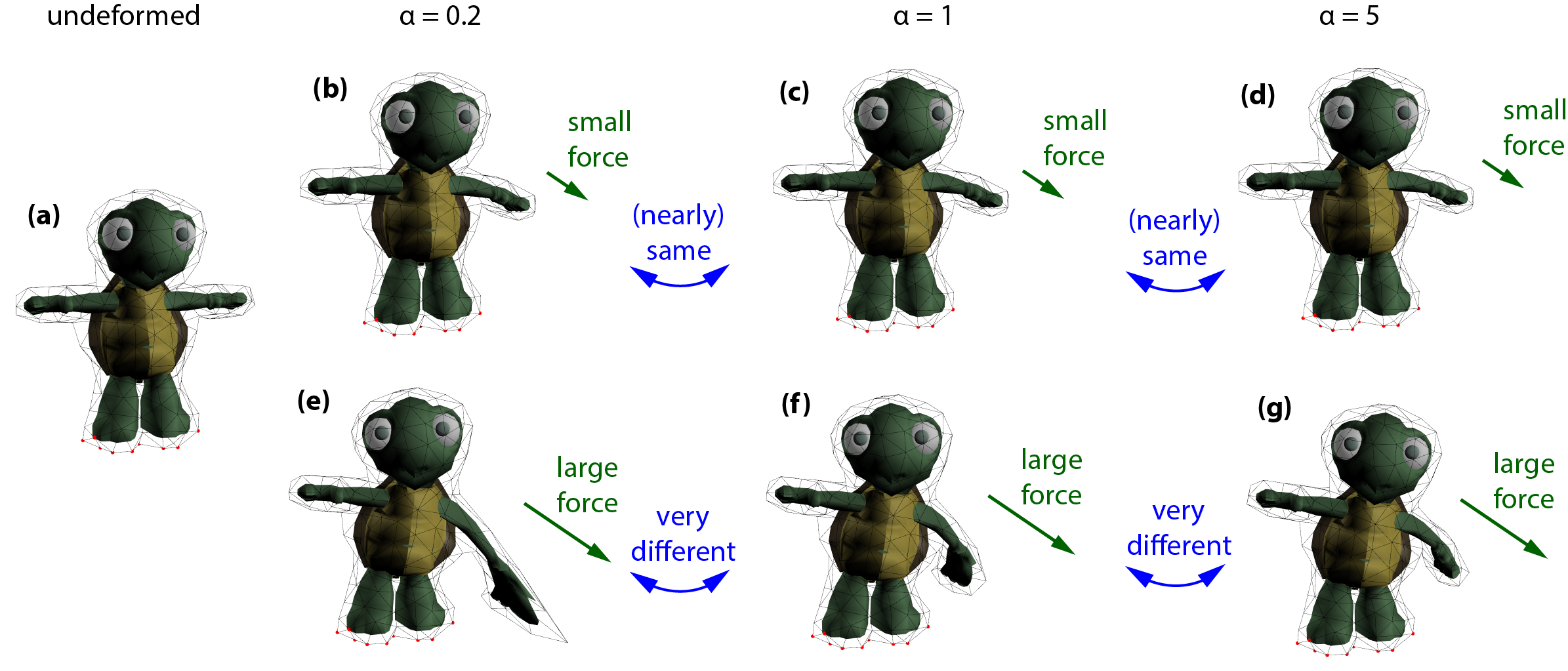}
\caption{
  {\bf By adjusting $\alpha,$ the deformable object can be made more or less stiff under large deformations.}
  The small deformation behavior is unchanged when changing $\alpha.$
} 
\label{fig:nonlinearity}
\end{figure}

\section{Normalization of 3D Volumetric Materials}
\label{sec:normalization}

\begin{table*}[t]
  \centering
  \caption{\textbf{Common 3D volumetric elastic materials and their linear parameters.}
    We have  $J=\lambda_1\lambda_2\lambda_3.$
  }
  \footnotesize
\begin{tabular}{l|c|c|c}\toprule
  Name & $\Psi(\lambda_1,\lambda_2,\lambda_3)$ & $\lambdaLame$ & $\muLame$ \\
  \midrule
Linear Corotational
(S.H. $\mathbf{\alpha = 1}$) &
$\mu \bigl(\sum_{i=1}^3 ( \lambda_{i} - 1 )^{2}\bigr) + \frac{\lambda}{2} (-3 + \sum_{i=1}^3 \lambda_{i})^{2}$
& $\lambda$  & $\mu$ \\[2mm]
St. Venant-Kirchhoff
(S.H. $\mathbf{\alpha = 2}$) &
$\frac{\mu}{4} \bigl(\sum_{i=1}^3 ( \lambda_{i}^{2} - 1 )^{2}\bigr) + 
\frac{\lambda}{8}( -3 + \sum_{i=1}^3 \lambda_{i}^{2} )^{2}$ 
& $\lambda$  & $\mu$ \\[2mm]
Hencky
(S.H. $\mathbf{\alpha = 0}$) &
$\mu \bigl(\sum_{i=1}^3 \log^2\lambda_i\bigr) + \frac{\lambda}2\log^2 J$ 
& $\lambda$  & $\mu$ \\[2mm]

Seth-Hill family (S.H.) \cite{Seth:1964:GSM} &
$\frac{\mu}{\alpha^2}\bigl(\sum_{i=1}^3 (\lambda_i^{\alpha} - 1)^2\bigr) 
+ \frac{\lambda}{2\alpha^{2}}( -3 + \sum_{i=1}^3 \lambda_i^{\alpha} )^2$ 
& $\lambda$  & $\mu$ \\[2mm]

Symmetric Seth-Hill \cite{Bazant:1998:ETC} &
$\frac{\mu}{4\alpha^2}\bigl(\sum_{i=1}^3 
(\lambda_i^{\alpha} - \frac{1}{\lambda_i^{\alpha}})^2\bigr) 
+ \frac{\lambda}{8\alpha^{2}}( \sum_{i=1}^3 
(\lambda_i^{\alpha} - \frac{1}{\lambda_i^{\alpha}}) )^2$ 
& $\lambda$  & $\mu$ \\[2mm]

Hill family \cite{Hill:1968:OCI} &
$\mu \bigl(\sum_{i=1}^3 f^2(\lambda_i)\bigr) 
+ \frac{\lambda}{2}(\sum_{i=1}^3 f(\lambda_i))^2$ 
& $\lambda$  & $\mu$ \\[2mm]

\midrule

Neo-Hookean (standard version) &
$\frac{\mu}{2}\bigl(-3 + \sum_{i=1}^3 \lambda_i^2\bigr) - \mu\log J + \frac{\lambda}{2}\log^2 J$ 
& $\lambda$  & $\mu$ \\[2mm]

Neo-Hookean (Ogden) &
$\frac{\mu}{2}\bigl(-3 + \sum_{i=1}^3 \lambda_i^2 \bigr) - \mu\log J + \frac{\lambda}{2}(J - 1)^2$ 
& $\lambda$  & $\mu$ \\[2mm]

\makecell[l]{Stable Neo-Hookean\\ (Eq. 13 in~\cite{Smith:2018:SNF})} &
$\frac{\mu}{2}\bigl(-3 + \sum_{i=1}^3 \lambda_i^2 \bigr) - \mu(J-1) + \frac{\lambda}{2}(J - 1)^2$ 
& $\lambda - \mu$  & $\mu$ \\[2mm]

STS material \cite{Pai:2018:THT} &
$\frac{\mu}{2}\bigl(-3 + \sum_{i=1}^3 \lambda_i^2\bigr) - \mu\log J + \frac{\lambda}{2}\log^2 J
+ \frac{\mu_4}{8}(\sum_{i=1}^3 (\lambda_i^2 - 1)^4)$
& $\lambda$  & $\mu$ \\[2mm]

\midrule

Valanis-Landel (original) \cite{Valanis:1967:TSE} &
$2\mu \sum_{i=1}^3 \lambda_i(\log\lambda_i - 1) + \frac{\lambda}{2} \log^2 J$
& $\lambda$  & $\mu$ \\[2mm]

Valanis-Landel (new) \cite{Valanis:2022:TVL,Chen:2023:CAR} &
$\sum_{i=1}^3 f(\lambda_i) + h(J)$
& $h''(1)$  & $\frac{1}{2} f''(1)$ \\[2mm]

Valanis-Landel (Xu's version) \cite{Xu:2015:NMD} &
$\sum_{i=1}^3 f(\lambda_i) + \sum_{i,j=1}^3 g(\lambda_i\lambda_j) + h(J)$
& $g''(1) + h''(1)$  & $\frac{1}{2} (f''(1) + g''(1))$ \\[2mm]

Peng-Landel \cite{Peng:1972:SEF} &
$E \sum_{i=1}^3 \left( \lambda_i - 1 - \log \lambda_i
- \frac{1}{6} \log^2 \lambda_i + \frac{1}{18} \log^3 \lambda_i
- \frac{1}{216} \log^4 \lambda_i \right)$
& 0 & $\frac{E}{3}$ \\[2mm]

\midrule

\makecell[l]{ARAP (As-Rigid-As-Possible)\\ \cite{Sorkine:2007:ARA}} &
$||F-R||^2 = \sum_{i=1}^3 (\lambda_i - 1)^2 $ & 0 & 1 \\[2mm]

Symmetric ARAP~\cite{Shtengel:2017:GOV} & 
$\frac{\mu}{2}(||F-R||^2 + ||F^{-1}-R^{-1}||^2) = \frac{\mu}{2} \sum_{i=1}^3
((\lambda_i - 1)^2 + (1 - \frac{1}{\lambda_i})^2) $  & 0 & $\mu$ \\[2mm]

\makecell[l]{Symmetric Dirichlet\\ \cite{Smith:2015:BPW}}  & 
$\frac{1}{2}(||F||^2 + ||F^{-1}||^2) = \frac{1}{2} \sum_{i=1}^3
(\lambda_i - \frac{1}{\lambda_i})^2$ & 0 & 2 \\[2mm]

\midrule

Ogden \cite{Ogden:1972:LDI} &
$\sum_{p=1}^N{\frac{\mu_p}{\alpha_p}\bigl(-3 + \sum_{i=1}^3 \lambda_i^{\alpha_p}\bigr)}$
& 0 & $\frac{1}{2} \sum_{p=1}^N \mu_p (\alpha_p-1)$ \\[2mm]

Mooney-Rivlin &
$C_1 J^{\frac{-2}{3}}\bigl(-3 + \sum_{i=1}^3 \lambda_i^2\bigr)
+ C_2 J^{\frac{-4}{3}}\bigl(-3 + \sum_{i=1}^3 \lambda_i^2\lambda_{1 + (i\, \textrm{mod}\, 3)}^2\bigr)$ 
& $-\frac{4}{3}( 2C_1 + 5C_2 )$  & $C_1$ \\

\bottomrule
\end{tabular}
\label{tab:materials}
\end{table*}

Table~\ref{tab:materials} gives the PK1-linearizations for several common 3D volumetric materials.
For volumetric solids, we can compute Young's modulus $E$ and Poisson's ratio $\nu$ as 
\begin{equation}
E=\frac{\muLame(3\lambdaLame+2\muLame)}{\lambdaLame+\muLame},\quad
\nu=\frac{\lambdaLame}{2(\lambdaLame+\muLame)}.
\end{equation}
Therefore, Ogden, ARAP, Symmetric ARAP and Symmetric Dirichlet have $\nu=0$ (i.e.,
there is no volume preservation); for these materials, we have $E=2\muLame.$
Often in computer graphics and engineering, we want to prescribe the material ``stiffness'' (via $E$),
and volume preservation (via $\nu$).
This can be done using the formulas
\begin{gather}
\label{eq:lambdaAndMuLameFormula}
  \lambdaLame = \frac{E\nu}{(1+\nu)(1-2\nu)},\quad \muLame = \frac{E}{2(1+\nu)}. 
\end{gather}
The next step is then to set the parameters of the specific material (Table~\ref{tab:materials}) to
meet this $\lambdaLame$ and $\muLame.$ For two-dimensional material families this will
lead to a unique solution of the material parameters, as a function of $\lambdaLame$ and $\muLame.$
For three or higher-dimensional families, the solution is not unique, but can be biased
to prefer modifying only any two particular parameters, or modify all parameters by the least amount,
as two possible example approaches.

\begin{figure}[!t]
\centering
\includegraphics[width=0.95\hsize]{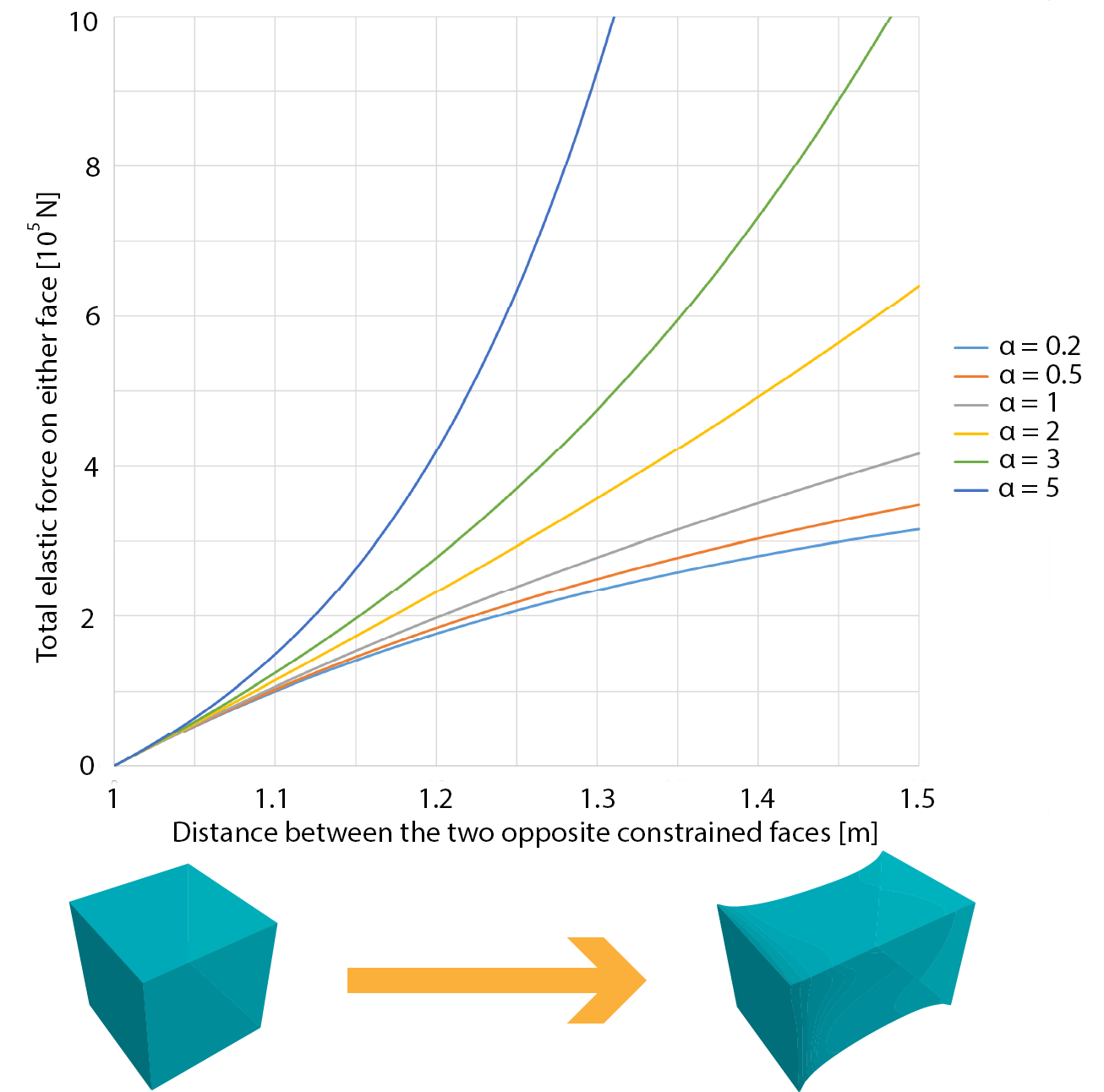}
\caption{
  {\bf Modifying the nonlinearity of the Stable Neo-Hookean (SNH) material.} 
  An elastic cube ($1\textrm{m}\times 1\textrm{m}\times 1\textrm{m}$) is constrained on two opposite faces and pulled apart. We plot the
  total elastic force on either face versus the distance between the two faces. Note that
  the distance is 1.0 when the cube is undeformed. The SNH material ($\alpha=1$)
  is known to be ``soft'' under large deformations; observe that the curve is concave in this case.
  By adjusting $\alpha,$ SNH can be transformed into a much stiffer or softer material.
  Observe that all materials are the same in the linear region; this is thanks
  to using Equation~\ref{eq:nonlinearity} that preserves linear properties.
} 
\label{fig:cube}
\end{figure}

\section{Adjusting the nonlinearity}
\label{sec:nonlinearity}

Given any isotropic function $\psi$ that has been processed as explained
above (Section~\ref{sec:normalization}) to obtain desired linear properties,
we now show how to modify its nonlinearity, without changing the linear properties.
We do this by modifying $\psi(\lambda_1,\lambda_2,\lambda_3)$ to the function
\begin{equation}
\label{eq:nonlinearity}
\psi_\alpha(\lambda_1,\lambda_2,\lambda_3) =
\frac{1}{\alpha^2} \psi(\lambda_1^\alpha,\lambda_2^\alpha,\lambda_3^\alpha),
\end{equation}
for any $\alpha>0.$ 
It is worth noting that the Seth-Hill material family~\cite{Seth:1964:GSM} has a parameter that controls nonlinearity. Our nonlinearization has been inspired by this family, and could be seen as its natural generalization to an arbitrary hyperelastic isotropic material. As a matter of fact, the Seth-Hill family is a special case of our method, namely when our method is applied to the Linear Corotational material. As an important example, the StVK material is obtained from the Linear Corotational material
for $\alpha=2.$ This explains why film VFX practitioners have
observed that the StVK material is ``stiffer'' (more nonlinear; it
progressively stretches less and less when force is increased)
under large deformations than the Linear Corotational material,
and is as such more suitable for 
modeling biological tissues~\cite{Barbic:2024:PC}.

The gradient and Hessian of $\psi_\alpha$ can be easily computed
\begin{gather}
	\label{eq:psiAlphaGradient}
	\nabla \psi_\alpha (\lambda_1,\lambda_2,\lambda_3) =
	\frac{1}{\alpha} \begin{pmatrix}
		\lambda_1^{\alpha - 1} \partial_1 \psi(\lambda_1^\alpha,\lambda_2^\alpha,\lambda_3^\alpha) \\
		\lambda_2^{\alpha - 1} \partial_2 \psi(\lambda_1^\alpha,\lambda_2^\alpha,\lambda_3^\alpha) \\
		\lambda_3^{\alpha - 1} \partial_3 \psi(\lambda_1^\alpha,\lambda_2^\alpha,\lambda_3^\alpha)
	\end{pmatrix}, \\
	\nonumber
	\nabla^2 \psi_\alpha (\lambda_1,\lambda_2,\lambda_3) =
	\frac{\alpha - 1}{\alpha} \textrm{diag}(
	\lambda_1^{\alpha - 2} \partial_1 \psi, \ 
	\lambda_2^{\alpha - 2} \partial_2 \psi, \ 
	\lambda_3^{\alpha - 2} \partial_3 \psi
	) + \\
	\label{eq:psiAlphaHessian}
	+ \begin{pmatrix}
		\lambda_1^{2\alpha - 2} \partial_{11} \psi &
		\lambda_1^{\alpha - 1} \lambda_2^{\alpha - 1} \partial_{12} \psi &
		\lambda_1^{\alpha - 1} \lambda_3^{\alpha - 1} \partial_{13} \psi \\
		\lambda_1^{\alpha - 1} \lambda_2^{\alpha - 1} \partial_{12} \psi &
		\lambda_2^{2\alpha - 2} \partial_{22} \psi &
		\lambda_2^{\alpha - 1} \lambda_3^{\alpha - 1} \partial_{23} \psi \\
		\lambda_1^{\alpha - 1} \lambda_3^{\alpha - 1} \partial_{13} \psi &
		\lambda_2^{\alpha - 1} \lambda_3^{\alpha - 1} \partial_{23} \psi &
		\lambda_3^{2\alpha - 2} \partial_{33} \psi
	\end{pmatrix}.
\end{gather}
Plugging the rest shape $\lambda_1 = \lambda_2 = \lambda_3 = 1$ into Equation~\ref{eq:psiAlphaHessian}, one can easily see that $\nabla^2\psi_\alpha(1, 1, 1) = \nabla^2\psi(1, 1, 1)$. Therefore, the PK1-linearization remains the same. Observe that for $\alpha<1,\alpha=1,\alpha>1$ the material softens, remains the same, and stiffens under large deformations, respectively. Figures~\ref{fig:nonlinearity} and~\ref{fig:cube} demonstrate the
effect of tweaking $\alpha.$
Note that applying the nonlinearization is very easy and requires minimal coding change; 
all is needed is to add the $\alpha$ parameter to the existing code and minimally modify the classes that compute the elastic energy, its gradient and Hessian, 
as given in Equation~\ref{eq:nonlinearity},~\ref{eq:psiAlphaGradient}, and~\ref{eq:psiAlphaHessian}.

\section{Mixing and Matching Materials}
\label{sec:mixAndMatch}

For many materials, the elastic energy $\psi$ is already separated into
the ``$\lambdaLame$'' and ``$\muLame$'' parts (e.g. Seth-Hill family,
the Neo-Hookean materials, STS material, 
and the original and ``new'' Valanis-Landel material; see Table~\ref{tab:materials}),
\begin{equation}
\psi(\lambda_1,\lambda_2,\lambda_3) = 
\lambdaLame \psi_{\lambdaLame}(\lambda_1,\lambda_2,\lambda_3) + 
\muLame \psi_{\muLame}(\lambda_1,\lambda_2,\lambda_3),
\end{equation}
where the $\lambdaLame$ and $\muLame$ of $\psi_{\lambdaLame}$ (as defined in Equation~\ref{eq:defLambda}) are $1$ and $0,$ respectively,
and the $\lambdaLame$ and $\muLame$ of $\psi_{\muLame}$ (as defined in Equation~\ref{eq:defMu}) are $0$ and $1,$ respectively.
This makes it possible to produce new materials by linearly combining $\psi_{\lambdaLame}$ and $\psi_{\muLame}$
from two distinct materials, producing new materials. These materials are sometimes superior to the original materials.
For example, $\psi_{\lambdaLame}$ of Linear Corotational and StVK materials
only provide volume preservation (at the selected $\nu$) for small deformations.
Under large deformation, those formulas no longer corresponds to any kind of volume preservation.
To address this, one can combine $\psi_{\muLame}$ of either of those materials with the $\psi_{\lambdaLame}$
of a Neo-Hookean material (essentially either $(J-1)^2$ or $\log^2(J)$). This produces a material
that obeys the Poisson's effect of the given $\nu$ under small deformations, but still preserve volume under large deformations.
Although such materials have been proposed before
(Linear Corotational with $(J-1)^2$~\cite{Stomakhin:2012:ECI,Smith:2018:SNF}; StVK with $(J-1)^2$~\cite{Zheng:2022:SOH})
our paper makes it possible to discover such combinations systematically.
Also, our approach makes it possible to augment any energy that did not consider volume preservation.
For example, we can augment geometry processing energies with volume preservation, 
giving them the ability to use a non-zero $\nu.$
Another example are incompressible materials in engineering literature. Commonly, the process of creating such materials
works as follows. First, define an energy function $\psi$
\emph{without} any consideration for incompressibility (i.e., $\lambdaLame=\nu=0$). Next, 
enforce exact incompressibility at the solver level, 
using constraints~\cite{Valanis:1967:TSE,Sussman:2009:AMI}. We can take that same energy function $\psi,$
and add a volume-preserving term $\psi_{\lambdaLame}$ of a Neo-Hookean material, with $\lambdaLame$ tuned
using our Equation~\ref{eq:defLambda}. 
This produces a compressible version of the same material that obeys the Poisson's effect of the given $\nu.$

Similarly, we can apply the nonlinearity idea separately to each part, using two independently tweakable parameters
$\alpha_1$ and $\alpha_2,$
\begin{gather}
\label{eq:nonlinearitySeparated}
\nonumber
\psi_{\alpha_1, \alpha_2} (\lambda_1,\lambda_2,\lambda_3) =
\frac{\lambdaLame}{\alpha_1^2} \psi_{\lambdaLame}(\lambda_1^{\alpha_1},\lambda_2^{\alpha_1},\lambda_3^{\alpha_1}) +\\
+ \frac{\muLame}{\alpha_2^2} \psi_{\muLame}(\lambda_1^{\alpha_2},\lambda_2^{\alpha_2},\lambda_3^{\alpha_2}). 
\end{gather}

\section{Results}

We implemented our results on a 3.00 GHz Intel Xeon i7 
CPU E5-2687W v4 processor with 48 cores.
The effect of applying the nonlinearization to computation times is negligible.
Our results enable the artist to adjust both the small-strain (rotation-aware) behavior
as well as nonlinear behavior under large deformations.
The dinosaur example (Figure~\ref{fig:volumetricMaterialNormalization})
illustrates our contribution of adjusting the PK1-linearization
behavior of elastic 3D solids. In this manner, two materials can be
made to have equal behavior under small deformations,
enabling one to more easily compare
their nonlinearities under large deformations.
Figures~\ref{fig:nonlinearity} and~\ref{fig:cube} demonstrate the effect of adjusting nonlinearity.


\section{Conclusion}

While the Linear Corotational material is widely used, we 
provided a formal framework and proved in it that this material is ``special''
and arguably the simplest kind of an isotropic material.
We provided an algorithm to approximate any isotropic material
with its ``PK1-linearization'', namely the Linear Corotational material
that shares the same Lam\'{e} parameters, or equivalently, the same
Young's modulus and Poisson's ratio.
We analytically derived the PK1-linearizations for many common materials (Table~\ref{tab:materials}),
and gave a method to minimally modify any material so that it has a desired
target PK1-linearization. Finally, we demonstrated how to easily adjust the material nonlinearity
using an intuitive one-dimensional parameter family, while keeping PK1-linearization
constant. This makes it possible for artists to easily adjust both the small-deformation
``stiffness'' and volume preservation of a material, as well as adjust the rate at which
the material stiffens or softens under large deformations. Arguably, such a 
$3$-dimensional family provides very good modeling power for many digital artists,
while retaining simplicity of having a very small number of parameters.
Of course, nonlinearity is much more than just tweaking one parameter,
and we did not investigate how nonlinearity could be adjusted by using several (more than two) meaningful parameters.
Our volume preservation is controled via the Poisson's ratio $\nu,$ and as such the Poisson's effect is only
accurate under small deformations. Under large deformations Poisson's ratio 
loses its meaning; this is commonly the case in nonlinear deformable object simulation
and not a specific limitation of our method. 
That said, we proposed the ``mixing and matching'' method of using Neo-Hookean
volume preservation with other ``standard'' volumetric materials, which
ensures that the resulting material obeys the Poisson's effect for any $\nu$ for small deformations,
and still preserves volume under large deformations.
In the future, we would like to investigate more general nonlinearity ``filtering''
functions other than the ones used in Equation~\ref{eq:nonlinearity} to
achieve higher-order nonlinearity effects.

\begin{acks}
This research was sponsored in part by NSF (IIS-1911224), 
USC Annenberg Fellowship to Huanyu Chen, Bosch Research and Adobe Research.
\end{acks}

\bibliographystyle{ACM-Reference-Format}
\bibliography{materialFiltering}


\end{document}